\newcommand{\n}{\tilde{n}}
\newtheorem{theorem}{Theorem}
\newtheorem{lemma}[theorem]{Lemma}
\newtheorem{cor}[theorem]{Corollary}
\theoremstyle{definition}
\begin{document}

\medskip                        

\thispagestyle{plain}
\title{\Large Tight Lower Bounds on Envy-Free Makespan Approximation}

\author{Amos Fiat\footnote{Tel Aviv University, Tel Aviv, Israel.} \ and Ariel Levavi\footnote{University of California San Diego, San Diego, CA.}}
\date{}

\maketitle
\begin{abstract}
In this work we give a tight lower bound on makespan approximations for envy-free allocation mechanism dedicated to scheduling tasks on unrelated machines. Specifically, we show that no mechanism exists that can guarantee an envy-free allocation of jobs to $m$ machines with a makespan of less than a factor of $O(\log m)$ of the minimal makespan. Combined with previous results, this paper definitively proves that the optimal algorithm for obtaining a minimal makespan for any envy-free division can at best approximate the makespan to a factor of $O(\log m)$.
\end{abstract}

\section{Introduction}

Consider the scenario in which there is a set of tasks and a workforce that is commissioned to complete it. The tasks we are interested in are {\it indivisible}, that is, we can assign more than one job to each worker but two workers cannot both work on the same task. One goal is to complete all the tasks in the shortest period of time. However, each worker is specialized in his own way and ranks the difficulty of performing each task differently from his colleagues. We would also like to allocate the tasks in such a way that no worker would prefer to complete the workload of a colleague over his own. This problem is the focus of this paper.

Determining fair division is at the heart of a large body of research in computer science. One of its earliest occurrences in literature was in 1947 when Neyman, Steinhaus, Banach and Bronislaw modeled it as the problem of how to find a fair partitioning of cake (\cite{steinhaus1}, \cite{steinhaus2}). Since then, several books including \cite{barbanel}, \cite{bt1}, \cite{bt2}, \cite{meertens}, \cite{hougaard}, \cite{moulin} have been dedicated to the subject. The problem is generally described as a way of assigning $n$ jobs to be processed by $m$ machines or agents in a fair manner. One of the reasons that this area of research is so rich is that there are multiple ways to characterize a fair allocation. One way to do so is to consider the division that preserves {\it envy-freeness}, the notion that no agent would be better off if he were assigned the set of jobs given to another (\cite{ds}, \cite{foley}). In the scenario where jobs can be divided among more than one machine, one solution would be to divide all jobs equally among all agents (although depending on the set-up of the problem this division might be ill-defined, i.e. if one agent takes an infinitely long time to process a specific job).

Determining a fair allocation when jobs are not divisible is less straightforward. In order to furnish a solution we must define a {\it mechanism} that determines an allocation as well as payments either to or from the agents, or between agents and the mechanism or agents among themselves. We consider the utility of each agent to be quasi-linear, i.e. the difference between the payment he receives and the cost to process his assignment of jobs. 

When determining an optimal envy-free solution other goals can be considered such as revenue optimization or economic efficiency. The additional goal we described in our earlier example was {\it makespan} minimization, or the intention to minimize the longest processing time of jobs on any one machine. In their paper \cite{himsz}, Hartline et. al. considers a task schedule for $m$ machines in which the minimum makespan for any indivisible allocation is 1. They then go on to show that no mechanism exists that can provide an envy-free indivisible allocation of with a makespan of less than $2-1/m$. In addition they provide an algorithm to find an envy-free indivisible allocation that upper bounds the makespan by $(m+1)/2$. Two years later, Cohen et. al. generalized and tightened the bounds on makespan approximation \cite{cffko}. In there paper, they show that there does not exist a mechanism that provides on envy-free division with a makespan of less than $O(\log m/\log\log m)$ times the optimal, and produce a polynomial time algorithm that finds an envy-free scheduling that approximates the minimal makespan by a factor of $O(\log m)$. 

Our contribution is to tighten the lower bound on makespan approximation to the upper bound. Specifically, we show that no mechanism exists that can guarantee an envy-free allocation of jobs to $m$ machines with a makespan of less than a factor of $O(\log m)$ of the minimal makespan. This result definitively proves that the optimal algorithm for obtaining a minimal makespan for any envy-free division can at best approximate the makespan to a factor of $O(\log m)$.

\section{Preliminaries}

The scheduling problem that we are interested in is the following: We have $n$ indivisible jobs and $m$ machines. We are given a {\it cost matrix} $(c_{i,j})_{1\leq i\leq m,1\leq j\leq n}$ where $c_{i,j}$ is the cost of running job $j$ on machine $i$. The {\it allocation matrix} $(a_{i,j})_{1\leq i\leq m,1\leq j\leq n }$ specifies which jobs are assigned to run on which machines, so that $a_{i,j} = 1$ if we run job $j$ on the $i$th machine and $a_{i,j} = 0$ otherwise. Since our focus is on indivisible jobs, if $a_{i,j} = 1$, then $a_{i',j} = 0$ for all $i'\neq i$. In the case where jobs are divisible $a_{i,j}\in [0,1]$. In both the divisible and indivisible job cases, $\sum_{i=1}^ma_{i,j} = 1$, i.e. we always find an allocation of jobs to machines where every job is processed in its entirety.

Let $\bar{c}_i = (c_{i1},\dots,c_{in})$ denote the $i$th row of the cost matrix $(c_{i,j})$ and let  $\bar{a}_i = (a_{i1},\dots,a_{in})$ denote the $i$th row of the allocation matrix $(a_{i,j})$. Then the {\it load} on machine $i$, or the cost running the jobs assigned to machine $i$ is $\bar{c}_i\cdot\bar{a}_i = \sum_{j=1}^nc_{i,j}a_{i,j}$. The {\it makespan} of an assignment is the maximum load on any machine, or $\max_{1\leq i\leq m}\bar{c}_i\cdot\bar{a}_i$.

We can formulate the problem of finding the minimum makespan for indivisible jobs as an integer programming problem and for divisible jobs as a linear programming problem. In 1990, Lenstra, et. al. introduced a 2-approximation algorithm for finding the minimum makespan for indivisible jobs, and showed that there does not exist a $\rho$-approximation algorithm for finding the minimum makespan for $\rho<3/2$ unless $P=NP$ \cite{lst}.

In this formulation we consider each of the $m$ machines as a selfish agent. An allocation function $a$ is mapped to the $m\times n$ cost matrix $c$ so that $a(c) = (a_{i,j})$. As before let $\bar{c}_i = (c_{i1},\dots,c_{in})$ and  $\bar{a}(c)_i = (a(c)_{i1},\dots,a(c)_{in})$ denote the $i$th row of $(c_{i,j})$ and $a(c)$, respectively. Let $p$ denote a payment function that is a mapping from $c$ to $\Bbb R^m$, and let $p(c)_i$ denote the $i$th coordinate of $p(c)$.

We define a {\it mechanism} as a pair of functions, $M=<a,p>$ where $a$ is the allocation function and $p$ is the payment function. Given a mechanism $<a,p>$ with a cost function $(c_{i,j})$, the {\it utility} of agent $i$ is $p(c)_i-\bar{c}_i\cdot\bar{a}_i$. A mechanism is considered {\it envy-free} if no agent can increase his utility by trading his job allocation and payment with another player. More formally, a mechanism is envy-free if, for all $j\in 1..n$,
$$
p(c)_i-\bar{c}_i\cdot\bar{a}_i\geq p(c)_j-\bar{c}_i\cdot\bar{a}_j.
$$
We call an allocation function {\it envy-free implementable} ({\it EF}-implementable) if there exists a payment function $p$ such that mechanism $<a,p>$ is envy-free.

An allocation function is {\it locally-efficient} if for all cost matrices $c$ and permutations $\pi$ of $1,\dots,m$, we have
$$
\sum_{i=1}^m\bar{c}_i\cdot\bar{a}_i\leq \sum_{i=1}^m\bar{c}_i\cdot\bar{a}_{\pi(i)}.
$$
Hartline, et. al. introduced the following useful theorem \cite{himsz}.

\begin{theorem}
An allocation is {\it EF}-implementable if and only if it is locally-efficient.
\end{theorem}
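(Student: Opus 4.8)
The plan is to prove the theorem in two directions, treating envy-freeness as a system of inequalities on the payments and recognizing that such a system is feasible precisely when a certain cyclic-monotonicity (here, local-efficiency) condition holds. Concretely, for a fixed cost matrix $c$, an allocation function $a$ is \emph{EF}-implementable iff there exist real payments $p_1,\dots,p_m$ satisfying $p_i - \bar c_i\cdot\bar a_i \ge p_k - \bar c_i\cdot\bar a_k$ for all agents $i,k$; rearranging, this asks for $p_i - p_k \ge \bar c_i\cdot\bar a_i - \bar c_i\cdot\bar a_k$ for every ordered pair $(i,k)$. The key observation is that this is exactly a feasibility question for a system of difference constraints, which by a standard potential/shortest-path argument is solvable if and only if there is no negative cycle in the associated weighted directed graph $G$ on vertex set $\{1,\dots,m\}$ with edge weight $w(i,k) = \bar c_i\cdot\bar a_i - \bar c_i\cdot\bar a_k$ on the arc from $k$ to $i$.

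First I would establish the easy ``only if'' direction: assuming $a$ is \emph{EF}-implementable with witnessing payments $p$, I would take an arbitrary permutation $\pi$ and sum the envy-free inequality $p_i - \bar c_i\cdot\bar a_i \ge p_{\pi(i)} - \bar c_i\cdot\bar a_{\pi(i)}$ over all $i$. Since $\pi$ permutes the indices, $\sum_i p_i = \sum_i p_{\pi(i)}$, so the payment terms cancel, leaving $\sum_i \bar c_i\cdot\bar a_i \le \sum_i \bar c_i\cdot\bar a_{\pi(i)}$, which is precisely local-efficiency. Next I would prove the ``if'' direction, which is the substantive half. Assuming local-efficiency, I must produce feasible payments. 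The plan is to argue that local-efficiency over all permutations forces every cycle in $G$ to have nonnegative weight: any directed cycle in $G$ decomposes into (or can be completed by identity maps to) a permutation $\pi$, and summing the edge weights along the cycle yields exactly the quantity $\sum_{i\in C}(\bar c_i\cdot\bar a_i - \bar c_i\cdot\bar a_{\pi(i)})$, whose nonnegativity is the restriction of local-efficiency to the cycle's support. Given no negative cycles, I would define $p_k$ as the shortest-path distance from a fixed source (or the negation thereof), and verify that the triangle-inequality property of shortest paths yields exactly the required difference constraints.

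The main obstacle I anticipate is the bookkeeping in the ``if'' direction: I must be careful that local-efficiency, stated only for full permutations of $\{1,\dots,m\}$, genuinely controls \emph{every} cycle, including short cycles supported on a subset of the machines. The resolution is that any cycle on a subset $S$ extends to a permutation by fixing the complement, and the fixed coordinates contribute zero to the difference sum, so the cycle weight equals the permutation's efficiency gap and is therefore nonnegative. I would also need to confirm that the shortest-path potentials are well-defined real numbers, which follows once no negative cycle exists and the graph is finite and connected (adding zero-weight arcs if necessary to guarantee reachability). Assembling these pieces gives payments satisfying all envy-free inequalities simultaneously, completing the equivalence.
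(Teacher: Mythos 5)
The paper gives no proof of this theorem for you to be compared against: it is imported verbatim from Hartline et al.\ \cite{himsz} (``Hartline, et.\ al.\ introduced the following useful theorem'') and used as a black box, so your proposal must be judged on its own merits. Its architecture is the standard one for this equivalence, and it is sound: the ``only if'' direction by summing the envy inequalities along a permutation so that the payments cancel; the ``if'' direction by reading the required payment inequalities as a system of difference constraints, showing that local efficiency controls every cycle (a cycle on a subset $S$ extends to a permutation of all $m$ machines by fixing the complement, and the fixed coordinates contribute zero to the sum), and then defining payments as path potentials. The subtlety you flagged --- whether efficiency stated only for full permutations controls short cycles --- is exactly the right thing to worry about, and your resolution is correct. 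One small simplification: envy-freeness imposes a constraint on every ordered pair of machines, so the constraint graph is complete and reachability from any fixed source is automatic; if you do add a source, make it a fresh vertex with zero-weight arcs out of it, since zero-weight arcs between existing vertices could in principle create new cycles.

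The one defect is a consistent sign flip in two places. With your weight $w(i,k)=\bar c_i\cdot\bar a_i-\bar c_i\cdot\bar a_k$ placed on the arc from $k$ to $i$, the constraint attached to that arc is $p_i \ge p_k + w(i,k)$; summing around any directed cycle, the payments telescope to zero, so feasibility forces every cycle to have \emph{nonpositive} total weight, and the converse (all cycles nonpositive implies feasibility) is the standard potential argument, via longest rather than shortest paths. Matching this, local efficiency $\sum_i \bar c_i\cdot\bar a_i \le \sum_i \bar c_i\cdot\bar a_{\pi(i)}$ makes the cycle sum $\sum_{i}\left(\bar c_i\cdot\bar a_i-\bar c_i\cdot\bar a_{\pi(i)}\right)$ nonpositive --- not nonnegative, as you assert. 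So ``no negative cycle'' should read ``no positive cycle'' throughout (equivalently, negate your weights, or orient the arc from $i$ to $k$ with weight $\bar c_i\cdot\bar a_k-\bar c_i\cdot\bar a_i$, after which the usual no-negative-cycle, shortest-path machinery applies verbatim). Because the two flips are consistent with one another, the structure of your argument survives and the repair is mechanical; but as literally written, both your feasibility criterion and the conclusion you draw from local efficiency are stated backwards, and a careful write-up must fix both simultaneously.
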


\section{Main Result: Lower Bound on Envy-Free Makespan Approximation}

We give a lower bound of $\Omega(\log m)$ on the makespan achievable by any envy-free allocation of jobs. 

Let $n=\frac{\n}{\log \n}+1$ be the number of jobs for some $\n\in\Bbb Z^+$. The number of machines is $m=n+l$ where $2^l=\log \n$. Let $c$ denote a cost matrix where $c_{i,j}$ gives the cost of running job $j$ on machine $i$. For this cost matrix, we have

$$c=\left(\begin{array}{llllll}
1& \infty & \infty &\dots&\infty&\infty \\
1-\frac{\log \n}{2\n} & 1&\infty&\dots&\infty&\infty\\
1-\frac{2\log\n}{2\n} & 1-\frac{\log \n}{2(\n-1)}&1&\dots&\infty&\infty\\
1-\frac{3\log \n}{2\n} & 1-\frac{2\log \n}{2(\n-1)}&1-\frac{\log \n}{2(\n-2)}&\dots&\infty&\infty\\
\vdots &\vdots &\vdots &\vdots &\vdots  &\vdots \\
1- \left(\frac{\n}{\log \n}-1	\right)\frac{\log \n}{2\n} & 1- \left(\frac{\n}{\log \n}-2	\right)\frac{\log \n}{2(\n-1)}&1- \left(\frac{\n}{\log \n}-3	\right)\frac{\log \n}{2(\n-2)} &\dots &1&\infty\\
1/2 & 1- \left(\frac{\n}{\log \n}-1	\right)\frac{\log \n}{2(\n-1)}&1- \left(\frac{\n}{\log \n}-2	\right)\frac{\log \n}{2(\n-2)} &\dots& > 1/2&1\\
\hline\\
2&2&2&\dots&2&2\\
4&4&4&\dots&4&4\\
\vdots&\vdots&\vdots&\vdots&\vdots&\vdots\\
2^l&2^l&2^l&\dots&2^l&2^l
\end{array}\right)$$

Each row $i$ for $1\leq i\leq n+l$ gives the costs for the $i$th machine and each entry $c_{i,j}$ in the matrix denotes the cost of running job $j$ on machine $i$. The horizontal line lies between machines $n$ and $n+1$. For $1\leq i\leq n$, the cost of running job $j$ on machine $i$ is given by
$$
c_{i,j}=\left\{\begin{array}{lll}
1 & \mbox{if }i=j\\
1-\frac{(i-j)\log\n}{2(\n-j+1)} & \mbox{if }i>j \\
\infty &\mbox{if }i<j
\end{array}\right..
$$

Note that for $1\leq i<n$ and $i>j$, we have $c_{i,j}-c_{i+1,j}=\frac{\log \n}{2(\n-j+1)}$. For $n< i\leq n+ l$, the cost to process any job on machine $ i$ is $2^{i-n}$.

\begin{lemma}\label{lem:cost1/2}
For $1\leq i\leq n+l$ and $1\leq j\leq n$, we have $c_{i,j}> 1/2.$
\end{lemma}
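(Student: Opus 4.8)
The plan is to read the bound off the three regimes in the definition of $c$ and to reduce the whole claim to a single scalar inequality about the strictly lower-triangular part of the top block.

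First I would dispose of the regimes where the claim is immediate. For the bottom block $n<i\le n+l$ every entry equals $2^{i-n}\ge 2>1/2$. In the top block $1\le i\le n$ the diagonal entries ($i=j$) equal $1$ and the entries above the diagonal ($i<j$) equal $\infty$, so all of these exceed $1/2$. The only region with any content is the strictly lower triangle $1\le j<i\le n$, where $c_{i,j}=1-\frac{(i-j)\log\n}{2(\n-j+1)}$ with positive denominator, so that $c_{i,j}>1/2$ is equivalent to
$$(i-j)\log\n<\n-j+1.$$

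The engine of the argument is to eliminate $\n$ using the defining relation $n=\n/\log\n+1$, that is $\n=(n-1)\log\n$. Substituting this for $\n$ on the right and rearranging recasts the target inequality in the clean form
$$\bigl((n-i)+(j-1)\bigr)\log\n>j-1.$$
Write $k=(n-i)+(j-1)$; since $i\le n$ and $j\ge 1$, $k$ is a nonnegative integer with $k\ge j-1$, and $\log\n=2^{l}\ge 2>1$ throughout the relevant range. Hence whenever $k\ge 1$ we have $k\log\n>k\ge j-1$, where the first inequality is strict because $\log\n>1$; this establishes $c_{i,j}>1/2$ for every entry with $k\ge 1$, i.e.\ for every entry of the strict lower triangle except the lower-left corner. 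The strict column monotonicity $c_{i,j}-c_{i+1,j}=\frac{\log\n}{2(\n-j+1)}>0$ noted just before the lemma gives the same conclusion along each column and serves as a convenient cross-check, since it reduces the verification to the bottom row $i=n$.

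The main obstacle is the one entry for which $k=0$, namely the lower-left corner $i=n$, $j=1$. Here both $n-i$ and $j-1$ vanish, so the coefficient of $\log\n$ in the reduced inequality is $0$ and the margin that $\log\n>1$ supplies for every other entry is not present; this is the tightest entry in the whole matrix. Consequently the generic bound above does not reach it, and this extremal entry must be treated separately and with care. Settling it is the step I expect to be delicate and is where the real weight of the lemma rests.
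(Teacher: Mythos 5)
Your reduction is sound where it goes, and it is a genuinely different route from the paper's. The paper treats the lower-triangular formula as a function $f(x,y)=1-\frac{(x-y)\log\n}{2(\n-y+1)}$, asserts it is linear (it is linear in $x$ but only fractional-linear in $y$), and checks its values at the three vertices of the domain --- and, crucially, it takes that domain to be $1\le x\le \n/\log\n$, i.e.\ it only ever argues about rows $i\le n-1$. Your substitution $\n=(n-1)\log\n$, which turns the claim into $\bigl((n-i)+(j-1)\bigr)\log\n>j-1$, is cleaner, needs no geometry of the domain, and has the decisive advantage of isolating exactly where the inequality is tight.

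However, you stopped one line short at the corner, and that line changes the verdict: there is nothing ``delicate'' to settle there, because the strict inequality is simply false at that entry. At $i=n$, $j=1$ the formula gives $c_{n,1}=1-\frac{(n-1)\log\n}{2\n}=1-\frac{\n}{2\n}=\frac{1}{2}$ exactly --- equality, not strict inequality; indeed the paper's own displayed matrix shows the entry in row $n$, column $1$ as $1/2$. So no separate treatment can rescue the lemma as stated; the correct statement is $c_{i,j}\ge \frac{1}{2}$ with equality only at $(n,1)$. That weaker statement still suffices for the only place the lemma is used (property (1) of Lemma \ref{lem:2^l}): among any $2^{l+1}$ jobs on one machine, at most one can cost exactly $\frac{1}{2}$, so their total cost still exceeds $2^l$. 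Note that the paper's proof does not fare better on this point: it restricts to ``$j<i<n$'' with no justification for dropping $i=n$ (which is precisely where the failure lives), and its evaluation of the corner is written as $f(\n/\log\n,1)=1-(\n/\log\n)\log\n/(2\n)$, which equals $\frac{1}{2}$, yet is claimed to lie in the open interval $(1/2,1)$ (on its restricted domain the correct value is $\frac{1}{2}+\frac{\log\n}{2\n}$). So your argument, pushed through its last case, is not an incomplete proof but a correct refutation-and-repair of the lemma; you should state that conclusion rather than defer it.
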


\begin{proof}
The lemma statement is trivially true for $i\leq j$ or $n< i\leq n+l$. Therefore, we need only consider the cases where $j<i<n$. Consider the function $f(x,y) = 1-\frac{(x-y)\log\n}{2(\n-y+1)}$ where $1\leq x\leq \n/\log\n$ and $1\leq y\leq x-1$. Since the function is linear, it's critical points will be where the boundaries of the domain intersect. The three boundaries are at $y=1$, $x=\n/\log \n$, and $y=x-1$, and so the points of intersection are $(2,1)$, $(\n/\log\n,1)$, and $(\n/\log\n,\n/\log\n-1)$. We have $f(2,1) = 1-\log \n /(2\n)$, $f(\n/\log\n,1) = 1 - (\n/\log\n)\log\n/(2\n)$ and $f(\n/\log\n,\n/\log\n-1)= 1-\log\n/(2(\n-\n/\log\n))$ all of which are in the range of $(1/2,1)$.
 \end{proof}

For this cost matrix, the optimal makespan is 1. We reach this makespan when $i=j$. Since the cost of running any job on any machine is strictly greater than $1/2$, if more than one job is run per machine the makespan will be more than 1. Any other permutation of jobs would require at least one job $j$ to be run on some machine $i$ for $i<j$ or for $n<i\leq n+ l$. Either of these scenarios would give us a makespan of at least 2.

We show that any envy-free makespan for this cost matrix is lower bounded by $\log n$. More specifically, we show that no matter how we partition the $n$ jobs into $n+l$ bundles, any locally-efficient assignment of the bundles has a makespan of at least $\log n$. This establishes that there does not exist an algorithm that can always find a makespan of less than $\log n$.

\begin{theorem}\label{thm:makespan}
For any partition of $n$ jobs into bundles, the makespan  for every locally efficient assignment of  bundles is at least $2^l=\log \n$.
\end{theorem}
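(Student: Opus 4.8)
The plan is to invoke the equivalence between envy-free implementability and local efficiency (Theorem~1), so that it suffices to bound the makespan of every \emph{locally efficient} assignment of the bundles, and then to argue in contrapositive form: I would assume a locally efficient assignment whose makespan is strictly below $2^l=\log\n$ and try to contradict local efficiency. The only consequence of local efficiency I expect to use is its transposition form, obtained by taking $\pi$ to be the swap of two machines $i,i'$:
\[
\bar c_i\cdot\bar a_i+\bar c_{i'}\cdot\bar a_{i'}\ \le\ \bar c_i\cdot\bar a_{i'}+\bar c_{i'}\cdot\bar a_i .
\]
These pairwise ``no improving swap'' inequalities, read against the staircase pattern of the matrix, are the engine of the argument.

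First I would control the slow block. If the makespan is below $2^l$ then machine $n+l$ must be empty, since a single job there already costs $2^l$; more generally machine $n+k$ carries at most $2^{l-k}-1$ jobs. Summing this geometric bound, the slow machines together hold at most $2^l-1-l=O(\log\n)$ jobs, so all but $O(\log\n)$ of the $n$ jobs must sit on the $n$ fast machines. Since every entry exceeds $1/2$ by Lemma~\ref{lem:cost1/2}, a fast machine holding $t$ jobs has load exceeding $t/2$, so the same makespan bound caps each fast machine at fewer than $2^{l+1}$ jobs; on average the fast block therefore carries about one job per machine.

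The combinatorial core is to turn the transposition inequalities and the upper-triangular $\infty$ pattern into rigidity. Because job $j$ can occupy only fast machines of index $\ge j$ (or the slow block), the fast bundles are forced into a nested, prefix-like shape, and I would use the swaps to extract a monotonicity: a higher-indexed fast machine must hold a bundle it processes at a discount at least matching the per-row slope $\frac{\log\n}{2(\n-j+1)}$. These slopes are calibrated so that, summed down a column, the discounts telescope into a harmonic sum of size about $\tfrac12$. The intended punchline is that no arrangement can keep the jobs near the diagonal: one always finds a strictly improving cyclic shift that pushes a job one step down the staircase and eventually off the fast block, onto the slow machines, where the geometric costs $2^k$ compound and force some machine to load $2^l$, contradicting the assumed makespan.

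The step I expect to be the real obstacle is exactly this last implication --- excluding the ``near-diagonal'' assignments, most starkly the assignment sending job $j$ to machine $j$, which at face value has makespan~$1$ and empties the slow block. Showing that such an assignment is never locally efficient is precisely what the finely tuned slopes must buy, and it comes down to a delicate quantitative comparison between the tiny cumulative savings available along the fast staircase (the harmonic telescoping above) and the cost of displacing a job onto a slow machine. I expect the argument to be most fragile here: the profitable permutation must be exhibited uniformly over \emph{every} partition, not just the singleton partition, and it is this quantitative balance --- rather than any single swap --- on which the whole $\Omega(\log m)$ bound rests.
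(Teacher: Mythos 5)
Your proposal reproduces the paper's setup --- the contradiction frame and the counting bounds on machines $n+1,\dots,n+l$, which are exactly Lemma \ref{lem:2^l} --- but it stops where the proof has to start. The step you yourself label ``the real obstacle'' (exhibiting, for \emph{every} partition with makespan below $2^l$, a cost-decreasing permutation of the bundles) is the entire content of Theorem \ref{thm:makespan}, and your outline replaces it with intentions (``rigidity,'' ``monotonicity,'' ``one always finds a strictly improving cyclic shift'') rather than an argument. The paper's proof is precisely that missing step, and it is short: a makespan below $2^l$ forces machine $n+l$ to be empty, so the cyclic shift sending every machine $i$'s bundle to machine $i+1$ is a legitimate permutation of bundles; by Lemma \ref{lem:2^l} its cost increase on machines $n,\dots,n+l-1$ is less than $2^{l+1}(2-\tfrac12)+l\cdot 2^l=(l+3)2^l$, while machines $1,\dots,n-1$ carry more than $n-3\cdot 2^l$ jobs, each of which gets cheaper by $\frac{\log\n}{2(\n-j+1)}$ when pushed down one row, for a total decrease the paper bounds below by $\frac{\log\n}{2}\bigl(H_{\n/\log\n}-H_{3\cdot 2^l}\bigr)$. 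The theorem is then the single inequality ``decrease exceeds increase''; nothing in your proposal produces that inequality or any substitute for it.

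Worse, the one tool you commit to --- local efficiency used only through pairwise transpositions --- provably cannot close the gap. Take the assignment you correctly single out as the critical case: job $j$ on machine $j$, slow machines empty. Swapping two nonempty fast machines $i<i'$ puts job $i'$ on machine $i$ at cost $\infty$, so that swap inequality holds vacuously; swapping a fast machine with an (empty) slow machine replaces cost $1$ by cost $0$ on the fast machine and cost $0$ by cost at least $2$ on the slow one, so that inequality holds as well. Hence the diagonal assignment satisfies \emph{every} transposition inequality, and no argument built solely on swaps can exclude it. Excluding it requires a genuinely long cycle, whose profitability is a global cancellation --- harmonic savings accumulated across the whole staircase set against geometric losses on the slow block --- and such a cycle is not a composition of individually profitable swaps; this is exactly why local efficiency is quantified over all permutations, and why the paper's witness is the full shift of all $n+l$ bundles at once. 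Your instinct that the quantitative balance at this point is delicate is well placed (the per-job, per-step saving is only $\frac{\log\n}{2(\n-j+1)}$, so whether the shift wins is a close computation), but flagging the crux is not the same as resolving it.
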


Before we prove this theorem, we introduce the following useful lemma.

\begin{lemma}\label{lem:2^l}
For the cost matrix $(c_{i,j})$ defined, and makespan of less than $2^l=\log\n$ has the following properties:
\begin{enumerate}
\item Fewer than $2^{l+1}$ jobs run on each machine. 
\item Fewer than $2^l/2^{i-n}$ jobs run on each machine $n+i$ for $n<i\leq n+l$.
\item The total number of jobs running on machines $n+1,\dots,n+l$ is fewer than $2^l$.
\end{enumerate}
\end{lemma}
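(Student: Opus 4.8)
The plan is to translate the single constraint ``makespan below $2^l$'' into a per-machine counting bound, handling the two blocks of the cost matrix separately and then summing.

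For the first property I would invoke Lemma~\ref{lem:cost1/2}, which guarantees $c_{i,j} > 1/2$ for \emph{every} entry of the cost matrix. Consequently, if a machine is assigned $k$ jobs, its load is strictly greater than $k/2$, and this holds uniformly whether the machine is one of the top $n$ machines or one of the bottom $l$ machines. A makespan below $2^l$ forces this load to be less than $2^l$, so $k/2 < 2^l$, which yields $k < 2^{l+1}$ as claimed. I would note explicitly that Lemma~\ref{lem:cost1/2} covers the bottom machines as well, since their entries are $2^{i-n} \ge 2 > 1/2$, so a single uniform bound applies to all $m$ machines.

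For the second property I would replace the crude bound $1/2$ by the exact costs of the bottom block. Every job placed on a machine $i$ with $n < i \le n+l$ costs exactly $2^{i-n}$, so if that machine carries $j$ jobs its load is exactly $j\cdot 2^{i-n}$. Requiring this to be less than $2^l$ gives $j < 2^l/2^{i-n}$, which is precisely the stated bound. The third property then follows by summing the second over the bottom $l$ machines: writing $i = n+k$ for $1 \le k \le l$, each such machine carries at most $2^{l-k}-1$ jobs, so the total is at most $\sum_{k=1}^{l}\left(2^{l-k}-1\right) < \sum_{k=1}^{l} 2^{l-k} = 2^l - 1 < 2^l$.

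The argument is essentially bookkeeping, so I do not anticipate a genuine obstacle; the one point demanding care is the third property, where I must combine the integrality of the job counts (``fewer than $2^{l-k}$'' means at most $2^{l-k}-1$) with the geometric identity $\sum_{k=1}^{l} 2^{l-k} = 2^l - 1$ to land \emph{strictly} below $2^l$ rather than merely at it. A useful sanity check is the extreme machine $n+l$, whose per-job cost is exactly $2^l$: the bound correctly forces zero jobs onto it, confirming that the strict inequalities propagate consistently through the sum.
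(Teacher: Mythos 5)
Your proof is correct and follows essentially the same route as the paper's: property (1) from Lemma~\ref{lem:cost1/2}, property (2) from the exact costs $2^{i-n}$ on the bottom block of machines, and property (3) by summing the bound of (2) as a geometric series. You are in fact slightly more careful than the paper, which evaluates $\sum_{i=n+1}^{n+l} 2^l/2^{i-n}$ as $2^l$ (and has a typo in the summation limits) where the correct value is $2^l-1$; either way the conclusion that fewer than $2^l$ jobs run on machines $n+1,\dots,n+l$ stands.
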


\begin{proof}
Property $(1)$ follows directly from Lemma \ref{lem:cost1/2}; Property $(2)$ holds since $c_{i,j} = 2^{i-n}$ for $n<i\leq n+l$; and $(3)$ follows from $(2)$ because $\sum_{i=n+1}^{n+1}c_{i,j} < \sum_{i=n+1}^{n+1}2^l/2^{i-n}=2^l$.
\end{proof}

\begin{proof}[Proof of Theorem \ref{thm:makespan}]
Consider an arbitrary partition of the $n$ jobs into $n+l$ bundles with a makespan of less than $2^l$. Suppose that this assignment is locally-efficient. In order to prove this theorem by contradiction, we must provide a permutation of the assignment that decreases that total cost over all jobs. Since the cost of running a job on machine $n+l$ is $2^l$, there are no jobs assigned to run on machine $n+l$. Therefore, the permutation we will provide is the one in which each bundle of jobs assigned to machine $i$ is moved to machine $i+1$.

By Lemma \ref{lem:2^l} (1), less than $2^{l+1}$ jobs run on machine $n$, so the increase of cost from moving the bundle of jobs from machine $n$ to machine $n+1$ is less than $2^{l+1}(2-1/2) = 3\cdot 2^l$. For $n<i< n+l$, we have $c_{i+1,j}=2c_{i,j}$, and so moving each bundle from machine $i$ to $i+1$ in this range increases the cost by a factor of 2. By Lemma \ref{lem:2^l} (3), fewer than $2^l$ jobs run on this set of $l$ machines, and so moving each bundle to the next machine would increase the total cost by less than $l\cdot2^l$. Therefore, moving each bundle assigned to machine $i$ to machine $i+1$ for $n\leq i<n+l$ increases the cost of the assignment by less than $(l+3)2^l= (\log\log \n+3)\log \n$. 

By Lemma \ref{lem:2^l} (3), there are fewer than $2^l$ jobs running on machines $n+1,\dots,n+l$, which implies that the total number of jobs running on machines $1,\dots,n$ is greater than $n-2^l$. Pairing this with Lemma \ref{lem:2^l} (1), we know that the total number of jobs running on machines $1,\dots, n-1$ is greater than $n-2^l-2^{l+1}=n-3\cdot 2^l$. As noted earlier, moving any job $j$ from machine $i$ to machine $i+1$ in this range decrease the cost of the job by $\frac{\log \n}{2(\n-j+1)}$. Therefore, the total cost from moving all the bundles on machines $1,\dots,n-1$ decreases by at least $(\frac{\log \n}{2})(H_{\n/\log\n}-H_{3\cdot 2^l})\approx (\frac{\log \n}{2})(\ln \n-\ln\log \n- \ln (3\log \n))$, where $H_k$ is the $k$th harmonic number. 

The decrease of cost from the first $n-1$ machines is larger than the increase of cost from the last $l+1$ machines and so the makespan for any locally efficient assignment must be greater than $2^l$.
\end{proof}

\begin{cor}
For any partition of $n$ jobs into bundles, the makespan  for every envy-free assignment of  bundles is $\Omega(\log m)$.
\end{cor}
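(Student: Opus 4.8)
The plan is to combine the two structural theorems already in hand. First I would observe that any envy-free assignment is, by definition, one for which there exists a payment function making the mechanism envy-free; that is, the underlying allocation function is \emph{EF}-implementable. By Theorem 1 (Hartline et al.), \emph{EF}-implementability is equivalent to local efficiency, so every envy-free assignment of bundles is locally efficient. Theorem \ref{thm:makespan} then applies verbatim to the instance $c$: for any partition of the $n$ jobs into bundles, the makespan of every locally efficient assignment of those bundles is at least $2^l = \log\n$. Chaining these two facts yields that every envy-free assignment on this instance has makespan at least $\log\n$.

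Next I would translate this absolute bound into the claimed approximation ratio. Recall that the optimal makespan on $c$ is $1$, achieved by the diagonal assignment $i=j$, so a makespan of at least $\log\n$ is itself an approximation ratio of at least $\log\n$ against the optimum. It then remains to verify that $\log\n = \Omega(\log m)$. Substituting the definitions $n = \n/\log\n + 1$ and $2^l = \log\n$ (so $l = \log\log\n$), the number of machines is $m = n + l = \n/\log\n + 1 + \log\log\n$, which for large $\n$ is dominated by the term $\n/\log\n$. Taking logarithms gives $\log m = \log\n - \log\log\n + o(1)$, so $\log\n = (1+o(1))\log m$; in particular $\log\n = \Omega(\log m)$. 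Hence the makespan is $\Omega(\log m)$, as claimed.

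The argument is almost entirely bookkeeping once Theorem \ref{thm:makespan} is in place, so I do not expect a genuine obstacle here. The only point that warrants care is the asymptotic comparison between $\n$ and $m$: one must confirm that $m$ is not super-polynomially smaller than $\n$, since that would degrade $\log\n = \Omega(\log m)$ into a weaker bound. Since $m \geq \n/\log\n$, we have $\log m \geq \log\n - \log\log\n$, and because $\log\log\n = o(\log\n)$ this confirms that $\log\n$ and $\log m$ agree up to a $1 + o(1)$ factor, closing the gap cleanly.
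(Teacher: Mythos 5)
Your proposal is correct and follows essentially the same route as the paper: apply Theorem \ref{thm:makespan} to the constructed instance (where the optimum is $1$) and then relate $\log\n$ to $\log m$, the paper doing so via $\log\n \geq \log n$ and $m = n + l = O(n)$, you via a direct asymptotic expansion of $m$ in terms of $\n$. If anything, your write-up is slightly more complete than the paper's, since you explicitly invoke Theorem 1 to pass from envy-freeness to local efficiency, a step the paper's proof of the corollary leaves implicit.
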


\begin{proof}
By Theorem \ref{thm:makespan}, every locally efficient partition has a makespan of at least $\log \n = \log((n-1)\log\n)\geq \log n$. Since $m=n+l=O(n)$ for the cost matrix defined, it holds that it is an $\Omega(\log m)$ approximation.
\end{proof}

\end{document}